\newcommand{\no}[1]{}
\tikzset{
->, 
>=stealth', 
node distance=2.5cm, 
every state/.style={fill=gray!10, minimum size=1cm}, 
initial text=$ $ 
}
\newcommand{\set}[2]{\{#1\,|\,#2\}}
\begin{document}
\title{On Stricter Reachable Repetitiveness Measures \thanks{Funded in part by Basal Funds FB0001, Fondecyt Grant 1-200038, and a Conicyt Doctoral Scholarship, ANID, Chile.}}
%
%
\author{Gonzalo Navarro \and
Cristian Urbina}
\authorrunning{Navarro and Urbina}
%
\institute{CeBiB --- Center for Biotechnology and Bioengineering \\ Departament of Computer Science, University of Chile}

\maketitle              
\begin{abstract}
The size $b$ of the smallest bidirectional macro scheme, which is arguably the most general copy-paste scheme to generate a given sequence, is considered to be the strictest reachable measure of repetitiveness. It is strictly lower-bounded by measures like $\gamma$ and $\delta$, which are known or believed to be unreachable and to capture the entropy of repetitiveness. 
In this paper we study another sequence generation mechanism, namely compositions of a morphism. We show that these form another plausible mechanism to characterize repetitive sequences and define NU-systems, which combine such a mechanism with macro schemes. We show that the size $\nu \le b$ of the smallest NU-system is reachable and can be $o(\delta)$ for some string families, thereby implying that the limit of compressibility of repetitive sequences can be even smaller than previously thought. We also derive several other results characterizing $\nu$.

\keywords{Repetitiveness measures \and Data compression \and Combinatorics on words}
\end{abstract}
\section{Introduction}

The study of repetitiveness measures, and of suitable measures of compressibility of repetitive sequences, has recently attracted interest thanks to the surge of repetitive text collections in areas like Bioinformatics, and versioned software and document collections. A recent survey \cite{Navarro2021} identifies a number of those measures, separating those that are {\em reachable} (i.e., any sequence can be represented within that space) from those that are not, which are still useful as lower bounds.

Reachable measures are, for example, the size $g$ of the smallest context-free grammar that generates the sequence \cite{KY00}, the size $c$ of the smallest {\em collage system} that generates the sequence \cite{Kida2003} (which generalizes grammars), the number $z$ of phrases of the Lempel-Ziv parse of the sequence \cite{LZ76}, or the number $b$ of phrases of a {\em bidirectional macro scheme} that represents the sequence \cite{Storer1982}. Such a macro scheme cuts the sequence into phrases so that each phrase either is an explicit symbol or it can be copied from elsewhere in the sequence, in a way that no cyclic dependencies are introduced. As such, macro schemes are the ultimate measure of what can be obtained by ``copy-paste'' mechanisms, which characterize repetitive sequences well.

Other measures are designed as lower bounds on the compressibility of repetitive sequences: $\gamma$ is the size of the smallest {\em string attractor} for the sequence \cite{Kempa2018} and $\delta$ is a measure derived from the string complexity \cite{Raskhodnikova2013,CEKNP19}. 

In asymptotic terms, it holds $\delta \le \gamma \le b \le c \le z \le g$ and, except for $c \le z$, there are string families where each measure is asymptotically smaller than the next. The recent result by Bannai et al.~\cite{Bannai2021}, showing that there exists a string family where $\gamma = o(b)$, establishes a clear separation between unreachable lower bounds ($\delta$,$\gamma$) and reachable measures ($b$ and the larger ones).

Concretely, Bannai et al.\ show that $b=\Theta(\log n)$ and $\gamma=O(1)$ for the Thue-Morse family, defined as $t_0=0$ and $t_{k+1} = t_k \overline{t_k}$, where $\overline{t_k}$ is $t_k$ with $0$s converted to $1$s and vice versa. This family is a well-known example of the fixed point of a {\em morphism} $\phi$, defined in this case by the rules $0 \rightarrow 01$ and $1 \rightarrow 10$. Then, $t_k$ is simply $\phi^k(0)$. This representation of the words in the family is of size $O(1)$, and each word can be easily produced in optimal time by iterating the morphism.

Iterating a small morphism is arguably a relevant mechanism to define repetitive sequences. Intuitively, any short repetition $\alpha[1,k]$ that arises along the generation of a long string turns into a longer repetition $\phi^t(\alpha[1])\cdots\phi^t(\alpha[k])$ in the final string, $t$ steps later. More formally, if a morphism is $k$-uniform (i.e., all its rules are of fixed length $k$), then the resulting sequence is so-called {\em $k$-automatic} \cite{Shallit2003} and its prefixes have an attractor of size $\gamma = O(\log n)$ \cite{Shallit2020}. That is, many small morphisms lead to sequences with low measures of repetitiveness. Further, in the Thue-Morse family, morphisms lead to a reachable measure of repetitiveness that is $o(b)$, below what can be achieved with copy-paste mechanisms.

In this paper we further study this formalism. First, we define {\em macro systems}, a grammar-like extension that we prove equivalent to bidirectional macro schemes. We then study {\em deterministic Lindenmayer systems} \cite{Lindenmayer1968a,Lindenmayer1968b}, a grammar-like mechanism generating infinite strings via iterated morphisms; they are stop\-ped at some level to produce a finite string. We combine both systems into what we call {\em NU-systems}. The size $\nu$ (``nu'') of the smallest NU-system is always reachable and $O(b)$. Further, we show that there are string families where $\nu = o(\delta)$, thereby showing that $\delta$ is not anymore a lower bound for the compressibility of repetitive sequences if we include other plausible mechanisms to represent them. We present several other results that help characterize the new measure $\nu$.

\section{Basic Concepts}


\subsection{Terminology}
Let $\Sigma$ be a set of symbols, called the \textit{alphabet}. A string $w$ of length $|w| = n$ (also denoted as $w[1,n]$ when needed) is a concatenation of $n$ symbols from $\Sigma$; in particular the string of length $0$ is denoted by $\varepsilon$. The set of $k$-length concatenations of symbols from $\Sigma$ is denoted $\Sigma^k$, and the set $\Sigma^*$ of strings over $\Sigma$ is defined as $\bigcup_{k\ge 0}\Sigma^k$; we also define $\Sigma^+ = \bigcup_{k\ge 1}\Sigma^k$. We juxtapose strings ($xy$) or combine them with the dot operator ($x\cdot y$) to denote their concatenation.
A string $x$ is a \textit{prefix} of $w$ if $w = xz$, a {\em suffix} of $w$ if $w=yx$, and a {\em substring} of $w$ if $w=yxz$, for some $y,z \in \Sigma^*$. Let $w[1,n]$ denote an $n$-length string. Then $w[i]$ is the $i$-th symbol of $w$, and $w[i,j]$ the substring $w[i]w[i+1]\cdots w[j]$ if $1 \leq i \leq j \leq n$, and $\varepsilon$ if $j < i$.


\subsection{Parsing based schemes}

Probably the most popular measure of repetitiveness is the number $z$ of {\em phrases} in the so-called {\em Lempel-Ziv parse} of a word $w[1,n]$ \cite{LZ76}. In such a parse, $w$ is partitioned optimally into phrases $w = x_1 \cdots x_z$, so that every $x_k$ is either of length $1$ or it appears starting to the left in $w$ (so the phrase $x_k$ is copied from some {\em source} at its left). This parsing can be computed in $O(n)$ time.

Storer and Szymanski \cite{Storer1982} introduced {\em bidirectional macro schemes}, which allow sources appear to the left or to the right of their phrases, as long as circular dependencies are avoided. We follow the definition by Bannai et al.~\cite{Bannai2021}.

Let $w[1,n]$ be a string. A bidirectional macro scheme of size $k$ for $w$ is a sequence $B = (x_1, s_1),\ldots,(x_k, s_k)$ satisfying $w = x_1\cdots x_k$ and $x_i = w[s_i, s_i + |x_i| - 1]$ if $|x_i| > 1$, and $s_i = \bot$ if $|x_i| = 1$. We denote the starting position of $x_i$ in $w$ by $p_i = 1+\sum_{j=1}^{i-1} |x_j|$. The function $f: [1,n] \rightarrow [1,n] \cup \{\bot\}$,

\begin{equation*}
  f(i) = \left\{
     \begin{array}{@{}l@{\thinspace}l}
       \bot  &: \text{ if } s_i = \bot,\\
       s_j + i - p_j&:  \text{ if } p_j \leq s_i < p_{j+1}, \\
     \end{array}
   \right.
\end{equation*}
is induced by the macro scheme. For $B$ to be a valid bidirectional macro scheme it must hold that, for each $i$, there exists some $r$ satisfying $f^r(i) = \bot$. Therefore, it suffices with the values $|x_i|$ and $s_i$, plus $x_i$ where $s_i=\bot$, to recover $w$.

We call $b \le z$ the
number $k$ of elements in the smallest bidirectional macro
scheme generating a given string $w[1,n]$. There are string families where $b=o(z)$ \cite{NOP20}. While $z$ is computed in linear time, computing $b$ is NP-hard.

\subsection{Grammars and generalizations}

The size $g$ of the smallest context-free grammar generating (only) a word $w[1,n]$ \cite{KY00} is a relevant measure of repetitiveness. Such a grammar has exactly one rule per nonterminal, and those can be sorted so that the right-hand sides mention only terminals and previously listed nonterminals. The size of the grammar is the sum of the lengths of the right-hand sides of the rules. The {\em expansion} of a nonterminal is the string of terminals it generates; the word defined by the grammar is the expansion of its last listed nonterminal.

More formally, a grammar over the alphabet of terminals $\Sigma$ is a sequence of nonterminals $X_1,\ldots,X_r$, with a rule $X_k \rightarrow A_{k,1}\cdots A_{k,\ell_k}$, each $A_{k,r}$ being a terminal or some nonterminal in $X_1,\ldots,X_{k-1}$. The expansion of a terminal $a$ is $exp(a)=a$, and that of a nonterminal $X_k$ is $exp(X_k) = exp(A_{k,1})\cdots exp(A_{k,\ell_k})$. The grammar represents the string $w=exp(X_r)$, and its size is $\sum_{k=1}^r \ell_k$.

Composition systems were introduced by Gasieniec et al.~\cite{Gasieniec1996}. Those add the ability to reference any prefix or suffix of the expansion of a previous nonterminal (and, thus, substrings as prefixes of suffixes). Let us use the more general form, allowing terms $A_{k,r} = X_j[s,t]$ where $exp(X_j[s,t]) = exp(X_j)[s,t]$.

Kida et al.~\cite{Kida2003} extended composition systems with \textit{run-length} terms of the form $(A_{k,r})^t$, so that
$exp((A_{k,r})^t) = exp(A_{k,r})^t$, the expansion of $A_{k,r}$ concatenated $t$ times. They called this extension a {\em collage system}. We call $c \le g$ the smallest collage system generating a word $w[1,n]$, and it always holds $b = O(c)$\footnote{At least if the collage system is {\em internal}, that is, every $exp(X_k)$ appears in $w$.} and $c=O(z)$. There are string families where $b=o(c)$ \cite{NOP20}, and where $z=o(g)$. Computing $g$ (and, probably, $c$ too) is NP-hard. 

\subsection{Lower bounds}

Kempa and Prezza introduced the concept of {\em string attractor} \cite{Kempa2018}, which yields an abstract measure that lower-bounds all the previous reachable measures.

Let $w[1, n]$ be a string. A string attractor for $w$ is a set of positions $A \subseteq [1, n]$ where for every substring $w[i, j]$ there exists a copy $w[i',j']$ (i.e., $w[i, j] = w[i',j']$) and a position $k \in A$ with $i' \leq k \leq j'$. The measure $\gamma$ is defined as the cardinality of the smallest of such attractors for a given string $w[1,n]$, and it always holds that $\gamma = O(b)$. Further, a string family where $\gamma = o(b)$ exists \cite{Bannai2021}.

Kociumaka et al.~\cite{CEKNP19} used the {\em string complexity} of $w[1,n]$ to define a measure called $\delta$. Let $S_w(k)$ be the number of distinct substrings of length $k$ in $w$. Then $\delta = \max\set{S_w(k)/k}{1 \leq k \leq n}$. This measure is computed in $O(n)$ time and it always holds that $\delta = O(\gamma)$; there are string families where $\delta = o(\gamma)$ \cite{KNP20}. While $\delta$ is unreachable in some string families, any string can be represented in $O(\delta\log(n/\delta))$ space \cite{KNP20}. Measure $\delta$ has been proposed as a lower bound on the compressibility of repetitive strings, which we question in this paper.

\subsection{Morphisms over strings}

We explain some general concepts about morphisms acting over strings \cite{Shallit2003,Lothaire2002}.
A \textit{monoid} $(M, *, e)$ is a set with an associative operation $*$ and a neutral element $e \in M$ satisfying $a * e = e * a = a$ for every $a \in M$. We write $ab$ for $a*b$ and say that $M$ is a monoid, instead of $(M, *, e)$. A {\em morphism of monoids} is a function $\phi : M_1 \rightarrow M_2 $, where $(M_1,*_1,e_1)$ and $(M_2,*_2,e_2)$ are monoids, $\phi(a *_1 b) = \phi(a) *_2 \phi(b)$ for every $a,b \in M_1$, and $\phi(e_1) = e_2$.

Let $\Sigma$ be a set of symbols, and $\cdot$ the concatenation  of strings. Then $(\Sigma^*, \cdot,\varepsilon)$ is a monoid with string concatenation, called the \textit{free monoid}. A {\em morphism of free monoids} $\phi: \Sigma^* \rightarrow \Delta^*$ is defined completely just by specifying $\phi$ on the symbols on $\Sigma$. If $\Sigma$ = $\Delta$, then $\phi$, is called an \textit{automorphism}, and $\phi$ is {\em iterable}. We define the $n$-iteration (or composition) of $\phi$ over $s$ as $\phi^n(s)$.

Let $\phi: \Sigma^* \rightarrow \Delta^*$ be a morphism of free monoids. We define $depth(\phi) = |\Sigma|$, $width(\phi) = \max_{a\in \Sigma}|\phi(a)|$, and $size(\phi) = \sum_{a \in \Sigma}|\phi(a)|$. We say $\phi$ is {\em expanding} if $|\phi(a)| > 1$, {\em non-erasing} if $|\phi(a)| > 0$, and {\em $k$-uniform} if $|\phi(a)| = k$, for every $a \in \Sigma$. A {\em coding} is a 1-uniform morphism. We say $\phi$ is {\em prolongable} on $a\in \Sigma$ if $\phi(a) = as$ for a non-empty string $s$.

Let $\phi$ be an automorphism on $\Sigma^*$. Let $\phi$ be prolongable on $a$, so $\phi(a) = as$. Then, $w = as\phi(a)\phi^2(a)\cdots$ is the unique {\em fixed point} of $\phi$ starting with $a$, that is, $\phi(w) = w$ \cite{Lothaire2002}. Words constructed in this fashion are called {\em purely morphic words}. If we apply a coding to them, we obtain {\em morphic words}. A morphic word obtained from a $k$-uniform morphism is said to be $k${\em -automatic} \cite{Shallit2003}.

\section{Macro Systems}

Our first contribution is the definition of \textit{macro systems}, a generalization of composition systems we prove to be as powerful as bidirectional macro schemes. That is, the smallest macro system generating a given string $w$ is of size $O(b)$.

\begin{definition}
A {\em macro system} is a tuple $M = (V, \Sigma, R, S)$, where $V$ is a finite set of symbols called the \textit{variables}, $\Sigma$ is a finite set of symbols disjoint from $V$ called the \textit{terminals}, $R$ is the set of rules (exactly one per variable)
\begin{equation*}
R : V \rightarrow (V \cup \Sigma \cup 
\{A[i,j] ~|~ A \in V, i,j\in \mathbb{N} \})^*,\end{equation*}
and $S \in V$ is the \textit{initial variable}. If $R(A) = \alpha$ is the rule for $A$, we also write $A \rightarrow \alpha$. The symbols $A[i,j]$ are called {\em extractions}. The rule $A \rightarrow \varepsilon$ is permitted only for $A=S$. 
The {\em size} of a macro system is the sum of the lengths of the right-hand sides of the rules, $size(M)=\sum_{A \in V} |R(A)|$.
\end{definition}

We now define the string generated by a macro system as the expansion of its initial symbol, $exp(S)$. Such expansions are defined as follows.

\begin{definition}
Let $M=(V,\Sigma,R,S)$ be a macro system. The {\em expansion} of a symbol is a string over $\Sigma^*$ defined inductively as follows:
\begin{itemize}
\item If $a \in \Sigma$ then $exp(a) = a$.
\item If $S \rightarrow \varepsilon$, then $exp(S)=\varepsilon$.
\item If $A \rightarrow B_1 \cdots B_k$ is a rule, then
$exp(A) =exp(B_1) \cdots exp(B_k)$.
\item $exp(A[i,j]) = exp(A)[i,j]$ (this second $[i,j]$ denotes substring).
\end{itemize}

We say that the macro system is {\em valid} if there is a single solution $w \in \Sigma^*$ for $exp(S)$. We say that the macro system {\em generates} the string $w$.
\end{definition}

Note that a macro system looks very similar to a composition system, however, it does not impose an order so that each symbol references only previous ones. 
This algorithm determines the string generated by a macro system, if any:
\begin{enumerate}
    \item Compute $|exp(A)|$ for every nonterminal $A$, using the rules:
    \begin{itemize}
        \item If $a\in\Sigma$, then $|exp(a)|=1$.
        \item If $A \rightarrow B_1\cdots B_k$, then $|exp(A)|=|exp(B_1)|+\cdots+|exp(B_k)|$.
        \item $|exp(A[i,j])| = j-i+1$.
    \end{itemize}
    This must generate a system of equations without loops (otherwise the macro system is invalid), which is then trivially solved.
    \item Replace every symbol $A[i,j]$ by $A[i] \cdots A[j]$; we use $A[r]$ to denote $A[r,r]$.
    \item Replace every $A[r]$, if $A \rightarrow B_1\cdots B_k$, iterating until obtaining a terminal:
    \begin{itemize}
        \item Let $p_i = 1+\sum_{j=1}^{i-1} |exp(B_j)|$, for $1\le i \le k+1$.
        \item Let $s$ be such that $p_s \le r < p_{s+1}$.
        \item If $B_s \in \Sigma$, replace $A[r]$ by $B_s$.
        \item Otherwise replace $A[r]$ by $B_s[r-p_s+1]$.
    \end{itemize}
    \item If the process to replace any $A[r]$ falls in a loop (i.e., we return to $A[r]$), then the system has no unique solution and thus it is invalid. Otherwise, we are left with a classical context-free grammar without extractions, and compute $w =  exp(S)$ in the classical way.
\end{enumerate}

Note that a rule like $A \rightarrow B~ A[1,(t-1)|exp(B)|]$ solves only for $exp(A)=exp(B)^t$, just like the run-length symbol $B^t$ of collage systems. For example, $A \rightarrow ab$ and $S \rightarrow A~S[1,4]$ generates $ababab$ as follows:
\begin{eqnarray*}
& & A~S[1]~S[2]~S[3]~S[4] \\
& & A~A[1]~A[2]~S[1]~S[2] \\
& & A~a~b~A[1]~A[2] \\
& & A~a~b~a~b \\
& & a~b~a~b~a~b
\end{eqnarray*}
This shows that macro systems are at least as powerful as collage systems. But they can be asymptotically smaller. For example, the smallest collage system generating the Fibonacci string $F_k$ (where $F_1 = b$, $F_2 = a$, and $F_{k+2} = F_{k+1} F_k$) is of size $\Theta(\log |F_k|)$ \cite[Thm.~32]{NOP20}. Instead, we can mimic a bidirectional macro scheme of size $4$ \cite[Lem.~35]{NOP20} with a constant-sized macro system generating $F_k$: $S \rightarrow S[f_{k-2}+1,f_k-2]~b~a~S[f_{k-2}+1,2f_{k-2}]$ if $k$ is odd and 
$S \rightarrow S[f_{k-2}+1,f_k-2]~a~b~S[f_{k-2}+1,2f_{k-2}]$ if $k$ is even (where $f_k=|F_k|$). For example, for $F_7$ the system is $S \rightarrow S[6,11]~b~a~S[6,10]$ and we extract $F_7=exp(S)$ as follows, using that $F_7[1,6]=F_7[6,11]$, $F_7[7]=b$, $F_7[8]=a$, and $F_7[9,13]=F_7[6,10]$:
\begin{eqnarray*}
& & S[6]~S[7]~S[8]~S[9]~S[10]~S[11]~b~a~S[6]~S[7]~S[8]~S[9]~S[10] \\
& & S[11]~b~a~S[6]~b~a~b~a~S[11]~b~a~S[6]~b \\
& & a~b~a~S[11]~b~a~b~a~a~b~a~S[11]~b \\
& & a~b~a~a~b~a~b~a~a~b~a~a~b
\end{eqnarray*}

In general, we can prove that a restricted class of our macro systems is equivalent to bidirectional macro schemes.

\begin{definition}
A macro system $M=(V,\Sigma,R,S)$ generating $w$ is {\em internal} if $exp(A)$ appears in $w$ for every $A \in V$. We use $m$ to denote the size of the smallest internal macro system generating $w$.
\end{definition}

\begin{theorem} It always holds that $m \le b$. 
\end{theorem}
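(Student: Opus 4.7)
The plan is to construct, from an arbitrary bidirectional macro scheme $B = (x_1, s_1),\ldots,(x_b, s_b)$ of size $b$ for $w[1,n]$, an internal macro system $M$ of size exactly $b$, whence $m \le b$ follows. I take a single variable $S$ and define the rule
\[
S \;\to\; y_1\, y_2 \cdots y_b,
\]
where, for each $i$, $y_i = x_i$ (a terminal) if $|x_i|=1$, and $y_i = S[s_i,\, s_i + |x_i| - 1]$ (an extraction) otherwise. The right-hand side has length exactly $b$, so $size(M) = b$. Since $V=\{S\}$ and the string to be generated is $exp(S)=w$, which is trivially a substring of $w$, the system is internal.

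The only nontrivial point is to verify that $M$ is valid, i.e., that $exp(S)$ is uniquely defined and equals $w$. I would invoke the resolution procedure given right after the definition of macro systems: to resolve $S[r]$, locate the unique $y_s$ whose range covers position $r$, that is, $p_s \le r < p_{s+1}$. If $y_s$ is a terminal, resolution halts with $w[r]=y_s$. Otherwise $y_s = S[s_s,\, s_s+|x_s|-1]$ and the algorithm rewrites $S[r]$ as $S[s_s + r - p_s]$. The key observation is that this successor position is exactly $f(r)$ as defined for $B$. Hence iterated resolution on $S[r]$ traces precisely the orbit of $r$ under $f$, and by validity of $B$ this orbit reaches some index $i$ with $s_i=\bot$ in finitely many steps, delivering the correct terminal $w[r]$.

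I do not expect any serious obstacle; the argument is essentially the observation that the rule $S\to y_1\cdots y_b$ \emph{is} the macro scheme $B$, read as a right-hand side with extractions in place of copy pointers. The only bookkeeping to make precise is the identity between one resolution step on $S[r]$ and one application of $f$ at $r$, after which both termination and correctness transfer immediately from $B$ to $M$, yielding $m \le size(M) = b$.
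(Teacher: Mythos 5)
Your construction and validity argument are exactly those of the paper: a single rule $S \rightarrow A_1\cdots A_b$ with terminals where $s_i=\bot$ and extractions $S[s_i,s_i+|x_i|-1]$ otherwise, with termination of the resolution of each $S[r]$ following from the identification of one resolution step with one application of $f$ and the validity of the macro scheme. The proposal is correct and matches the paper's proof, adding only the (worthwhile) explicit remark that the resulting system is internal.
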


\begin{proof}
Let $(x_1,s_1),\ldots,(x_b, s_b)$ be the smallest bidirectional macro scheme generating $w[1,n] = x_1\cdots x_b$. We construct a macro system $M = (\{S\}, \Sigma, R, S)$ with a single rule $S \rightarrow A_1 \cdots A_b$, where
$A_i$ is the single terminal $x_i$ if $s_i=\bot$, and the extraction symbol
$S[s_i,s_i+|x_i|-1]$ if not.

We now show that this macro system is valid. 
After we execute step 2 of our algorithm, the length of the resulting string (which we call $W$) is already $n$: it has only terminals and symbols of the form $W[i] = S[r]$. Note that this implies that $f(i)=r$ in the bidirectional macro scheme. In every step, we replace each such $S[r]$ by $W[r]$. Since the macro scheme is valid, for each $i$ there is a finite $k$ such that $f^k(i)=\bot$, and thus $W[i]$ becomes a terminal symbol after $k$ steps. \qed
\end{proof}

\begin{theorem}
For every internal macro system of size $m$ there is a bidirectional macro scheme of size $b \le m$.
\end{theorem}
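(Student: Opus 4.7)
The plan is to construct a bidirectional macro scheme of size at most $m$ from $M=(V,\Sigma,R,S)$ by unrolling the rules at most once per slot, using the occurrences granted by internality as sources. First, invoking internality, I fix for every $A\in V$ an occurrence $o_A\in[1,n-|exp(A)|+1]$ satisfying $w[o_A,o_A+|exp(A)|-1]=exp(A)$, and take $o_S=1$.

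Next I build the phrases iteratively. I begin with a single virtual phrase covering $[1,n]$ and labeled by $S$. While some phrase is labeled by a variable $A$ or by an extraction $A[l,r]$ and validity is not yet guaranteed, I apply $R(A)=B_{A,1}\cdots B_{A,|R(A)|}$ to that phrase: I replace it with the sub-phrases dictated by the slots of $R(A)$ that fall inside the phrase's range, labeling each sub-phrase accordingly (partial slots at the extraction borders become further extractions). Across all applications involving $R(A)$ I ensure that the slots of $R(A)$ used are disjoint, so no slot is counted twice; and I pin $o_A$ to the starting position of the phrase just expanded (adjusted for the extraction offset when applicable), so that subsequent copy phrases referring to $A$ have their sources land inside one of these expansions. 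When no further application is needed, I convert each labeled phrase into a BMS phrase: terminals give $(a,\bot)$; pure variables $A'$ give $(exp(A'),o_{A'})$; extractions $A'[l,r]$ give $(exp(A')[l,r],o_{A'}+l-1)$. By construction these phrases partition $[1,n]$.

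For the size bound, each application replaces a single phrase with $k\leq|R(A)|$ sub-phrases, for a net gain of $k-1$; since slots of each rule are used at most once across all applications, the total $\sum k$ is at most $m$, and the final phrase count is at most $1+\sum k-(\text{number of applications})\leq m$, as at least one application (of $R(S)$) is performed. The main obstacle is establishing validity, i.e.\ that $f^r(p)=\bot$ for every $p\in[1,n]$ and some $r$. The key observation is that one iteration of $f$ starting inside a copy phrase labeled by $A'$ or by $A'[l,r]$ lands exactly inside the expansion of $A'$ produced during the construction, which mirrors one step of the unfolding algorithm of the excerpt applied to $p$. Since $M$ is valid, that algorithm terminates on every position of $w$, so iterating $f$ also terminates. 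The extraction case requires the extra care mentioned above: a copy phrase $A'[l,r]$ lands at offset $l-1$ inside $A'$'s expansion, which is precisely why $o_{A'}$ must be pinned to the start of an expanded phrase whose range of $exp(A')$ contains $[l,r]$.
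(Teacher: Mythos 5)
Your construction is genuinely different from the paper's (which collapses the system to a single rule for $S$ by redirecting every reference to $A$ to a full occurrence of $exp(A)$ in $w$), but it has a gap at its central point: you cannot in general satisfy simultaneously the two requirements your argument rests on, namely (a) that the slots of each rule $R(A)$ are used at most once across all applications (needed for $\sum k\le m$), and (b) that every unexpanded copy phrase labeled $A[l,r]$ has some \emph{single contiguous} expanded region of $exp(A)$ whose range contains $[l,r]$ (needed both for the copied content to be correct and for your validity argument). Consider $S \rightarrow A[1,2]~x~A[2,3]~y~c~d~e$ and $A \rightarrow cde$, which generates $w=cdxdeycde$ and is internal ($exp(A)=cde$ occurs at $w[7,9]$), with $m=10$. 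The only phrases referring to $A$ in your unrolling are $A[1,2]$ at $[1,2]$ and $A[2,3]$ at $[4,5]$; their slot sets $\{1,2\}$ and $\{2,3\}$ overlap, so you may expand at most one of them without reusing a slot. Whichever one you leave as a copy phrase then has no pinned expansion whose range contains its interval: e.g.\ after expanding $A[1,2]$ and pinning $o_A=1$, the phrase for $A[2,3]$ gets source $2$, but $w[2,3]=dx\neq de$. Expanding both repairs correctness here but abandons your slot-disjointness invariant, and in general unrestricted expansion (think of a chain $A_i \rightarrow A_{i+1}[1,k]\,A_{i+1}[1,k]$) produces $\Theta(n)$ phrases, which can be exponentially larger than $m$. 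You give no rule for choosing which phrases to expand and no argument that a choice satisfying both (a) and (b) exists; the example shows none need exist.

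The fact you are not exploiting is precisely what internality buys: $exp(A)$ occurs \emph{in full, contiguously} somewhere in $w$, so a single position $o_A$ can serve as the source for \emph{every} reference $A[l,r]$, whatever $[l,r]$ is. You do fix such an $o_A$ in your first step, but you then override it by pinning to expansion sites, which need not cover $exp(A)$ contiguously; the paper instead commits to the internality-provided occurrence, rewrites every $A[i',j']$ as $S[i'+i-1,j'+i-1]$, and inherits termination of $f$ from termination of the macro system's extraction procedure on the resulting single-rule system. To repair your proof you would have to route all copy phrases for $A$ to the internality-provided occurrence $o_A$ and then argue that iterating $f$ from inside that occurrence terminates --- which is essentially the paper's argument, not the expansion-pinning one you propose.
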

\begin{proof}
An internal macro system $M=(V,\Sigma,R,S)$ generating $w[1,n]$ can always be transformed into one with a single rule for the initial symbol. Let $A \in V$ be such that $W[i,j] = exp(A)$. We can then replace every occurrence of $A$ by $S[i,j]$, and every occurrence of $A[i',j']$ by $S[i'+i-1,j'+i-1]$, on the right-hand sides of all the rules. In particular, the rule defining $S$ will now contain terminals and symbols of the form $S[i,j]$, and thus all the other nonterminals can be deleted.

From the resulting macro system $S \rightarrow A_1 \cdots A_{m'}$, where $m' \le m$, we can derive a bidirectional macro scheme $(x_1,s_1),\ldots,(x_{m'},s_{m'})$, as follows: if $A_t$ is a terminal, then $x_t$ is that terminal and $s_t=\bot$. Otherwise, $A_t$ is of the form $S[i,j]$ and then $x_t=w[i,j]$ and $s_t=i$. The resulting scheme is valid, because our algorithm extracts any $S[i]$ after a finite number $k$ of steps, which is then the $k$ such that $f^{k+1}(i)=\bot$. \qed
\end{proof}

That is, bidirectional macro schemes are equivalent to internal macro systems. General macro systems can be asymptotically smaller in principle, though we have not found an example where this happens.

\section{Deterministic Lindenmayer Systems}

In this section we study a mechanism for generating infinite sequences called {\em deterministic Lindenmayer Systems} \cite{Lindenmayer1968a,Lindenmayer1968b}, which build on morphisms. We adapt those systems to generate finite repetitive strings. Those systems are, in essence, grammars with only nonterminals, which typically generate longer and longer strings, in a levelwise fashion. For our purposes, we will also specify at which level $d$ to stop the generation process and the length $n$ of the string $w$ to generate. The generated string $w[1,n]$ is then the $n$-length prefix of the sequence of nonterminals obtained at level $d$. 
We adapt, in particular, the variant called CD0L-systems, though we will use the generic name {\em L-systems} for simplicity.

\begin{definition}
An {\em L-system} is a tuple $L=(V,R,S,\tau,d,n)$, where $V$ is a finite set of symbols called {\em variables}, $R : V \rightarrow V^+$ is the set of {\em rules}, $S \in V^*$ is a sequence of variables called the {\em axiom}, $\tau : V \rightarrow V$ is a coding, $d \in \mathbb{N}$ is the level where to stop, and $n \in \mathbb{N}$ is the length of the string to generate.

An L-system produces {\em levels} of strings $L_i \in V^*$, starting from $L_0=S$ at level 0. Each level replaces every variable $A$ from the previous level by $R(A)$, that is, $L_{i+1}=R(L_i)$ if we identify $R$ with its homomorphic extension. The generated string is $w[1,n] = \tau(L_d[1,n]) \in V^*$, seeing $\tau$ as its homomorphic extension.

The {\em size} of an L-system is $|S|+\sum_{A \in V} |R(A)|$. We call $\ell$ the size of the smallest L-system generating a string $w$.
\end{definition}

L-systems then represent strings by iterating a non-erasing automorphism. 
Somewhat surprisingly, we now exhibit a string family where $\delta = \Omega(\ell\log n)$, thus L-systems are a reachable mechanism to generate strings that can be asymptotically smaller than what was considered to be a stable lower bound.

\begin{theorem} There exist string families where $\delta = \Omega(\ell\log n)$.
\end{theorem}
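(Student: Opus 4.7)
The plan is to exhibit a concrete L-system family of constant size whose generated strings have $\delta = \Omega(\log n)$; the construction below in fact achieves $\delta = \Omega(\sqrt{n})$, which is much stronger than the stated bound.

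I would take the morphism $\phi$ over the alphabet $\{a,b,c\}$ defined by $\phi(a) = ab$, $\phi(b) = bc$, and $\phi(c) = c$, paired with the identity coding. A straightforward induction shows $\phi^k(a) = a \cdot b \cdot bc \cdot bcc \cdots bc^{k-1}$, of length $1 + k(k+1)/2$. For each target length $n$, the L-system $L_n = (\{a,b,c\}, R_\phi, a, \mathrm{id}, \lceil\sqrt{2n}\,\rceil, n)$ generates a length-$n$ prefix $w_n$ of the infinite fixed point $\phi^\infty(a) = a \prod_{k\geq 0} bc^k$. Its size is $1 + |\phi(a)| + |\phi(b)| + |\phi(c)| = 1 + 2 + 2 + 1 = 6$, independent of $n$, so $\ell = O(1)$.

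The heart of the argument is a lower bound on the subword complexity of $\phi^\infty(a)$. For every triple $(i, k_2, j)$ of non-negative integers satisfying $i + k_2 + j + 2 = L$, $k_2 \geq i+1$, and $k_2 \geq j-1$, the string $c^i b c^{k_2} b c^j$ is a length-$L$ factor of $\phi^\infty(a)$: one locates a window whose middle block is exactly $bc^{k_2}$, flanked by the preceding block $bc^{k_2-1}$ (providing the $i$ leading $c$'s) and the following block $bc^{k_2+1}$ (providing the $j$ trailing $c$'s). These factors are pairwise distinct, since the two $b$-positions inside the factor recover $i$ and $k_2$, from which $j$ is fixed. A direct integer-point count yields $\Omega(L^2)$ valid triples, hence $p_\phi(L) = \Omega(L^2)$. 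All such factors already appear inside $w_n$ whenever $L \leq \lceil\sqrt{2n}\,\rceil - 2$; taking for instance $L = \lfloor \log_2 n \rfloor$ then gives $S_{w_n}(L)/L = \Omega(L) = \Omega(\log n)$, and therefore $\delta(w_n) = \Omega(\log n) = \Omega(\ell \log n)$.

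The main obstacle is the combinatorial lower bound on the number of distinct factors: one must enumerate the valid triples $(i, k_2, j)$ carefully, verify that each $c^i b c^{k_2} b c^j$ is genuinely present in the truncated prefix $w_n$ (not merely in the infinite fixed point), and handle minor boundary effects at the initial $a$ and near the cutoff, which discard only $O(1)$ factors and hence do not affect the asymptotic count.
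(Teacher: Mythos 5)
Your proposal is correct and follows the same high-level strategy as the paper: exhibit a constant-size L-system and lower-bound $\delta$ via the factor complexity of the generated word. The witness is different, though, and the difference is quantitatively significant. The paper iterates $0 \rightarrow 001$, $1 \rightarrow 1$, whose levels grow exponentially ($n = 2^{d+1}-1$), and counts the factors $01^j0$ with $j < d$ inside length-$d$ windows to get $S_w(d) = \Omega(d^2)$ and hence $\delta = \Omega(\log n)$. Your morphism $a \rightarrow ab$, $b \rightarrow bc$, $c \rightarrow c$ grows only quadratically ($n = \Theta(d^2)$), and the count of factors $c^i b c^{k_2} b c^j$ gives $S_w(L) = \Omega(L^2)$ all the way up to $L = \Theta(\sqrt{n})$, so you actually obtain $\delta = \Omega(\sqrt{n})$ with $\ell = O(1)$ --- a much larger gap than the stated $\Omega(\ell\log n)$, which your choice $L = \lfloor \log_2 n\rfloor$ then trivially implies. (There is no contradiction with the paper's bound $g = O(\ell\log n)$, since that bound requires the morphism to be expanding, which neither your $c \rightarrow c$ nor the paper's $1 \rightarrow 1$ is; here one only gets $g = O((d+1)\ell) = O(\sqrt{n})$, consistent with $\delta = \Theta(\sqrt{n})$.) Your distinctness and membership arguments check out: the two $b$-positions inside $c^i b c^{k_2} b c^j$ determine the triple, the constraints $i \le k_2-1$ and $j \le k_2+1$ are exactly what is needed to realize the factor between blocks $bc^{k_2-1}$ and $bc^{k_2+1}$, and restricting, say, $k_2 \in [L/2, 2L/3]$ already yields $\Omega(L^2)$ triples. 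The one point worth stating explicitly in a final write-up is the verification (by the prolongability of $\phi$ on $a$) that $\phi^d(a) = a\prod_{k=0}^{d-1} bc^k$, so that the chosen depth $d = \lceil\sqrt{2n}\,\rceil$ indeed makes level $d$ at least $n$ symbols long.
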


\begin{proof} 
Consider the L-system $L=(V,R,S,\tau,d,n)$ where $V=\{0,1\}$, $S = 0$, $R(0) = 001$, $R(1)=1$, $\tau(0)=0$, $\tau(1)=1$, and $n=2^{d+1}-1$. The family of strings is formed by all those generated by the systems $L$, where $d \in \mathbb{N}$. It is clear that all the strings in this family share the value $\ell=5$.

The first strings of the family generated by this system (i.e., its levels $L_i$) are
$0$, $001$, $0010011$, $001001100100111$, and so on. It is easy to see by induction that level $i$ contains $2^i$ $0$s and $2^i-1$ $1$s, so the string $L_i$ is of length $2^{i+1}-1$. 

More importantly, one can see by induction that levels $i\ge 2$ start with $00$ and contain all the strings of the form $01^j0$ for $1 \le j < i$. This is true for level $2$. Then, in level $i+1$ the strings $01^j0$ become $0011^j001$, which  contains $01^{j+1}0$, and the first $00$ yields $001001$, containing $010$.

Consider now the number of $d$-length distinct substrings in $L_d$, for $d\ge 4$. Each distinct substring $01^j0$, for $\lfloor d/2\rfloor-1 \le j \le d-2$, yields at least $d-j-1$ distinct $d$-length substrings (containing $01^j0$ at different offsets; no single $d$-length substring may contain two of those).
These add up to $d^2/8 + d/4$ distinct $d$-length substrings, and thus $\delta=\Omega(d)=\Omega(\log n)$ on the string $w=\tau(L_d)$.\qed
\end{proof}

On the other hand, $L$-systems are always reachable, which yields the immediate result that $\delta$ and $\ell$ are incomparable.

\begin{theorem}
There exist string families where $\ell = \Omega(\delta\log n)$.
\end{theorem}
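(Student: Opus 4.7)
The plan is a pigeonhole counting argument, dual to the explicit L-system construction of the previous theorem: the set of size-$\ell$ L-systems is too small to represent every low-$\delta$ length-$n$ string, so some such string must require a large $\ell$.

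First I would upper-bound the number of distinct length-$n$ strings generable by L-systems of size at most $\ell$. Any such system is determined by a variable set $V$ with $|V| \le \ell$ (since each rule contributes length at least $1$ to the size), an axiom $S \in V^*$ of length at most $\ell$, a rule function $R\colon V \to V^+$ with $\sum_{A \in V}|R(A)| \le \ell$, a coding $\tau\colon V \to V$, and the integers $d$ and $n$. Encoding the axiom and rules over an alphabet of size $\le \ell$ costs $O(\ell \log \ell)$ bits, and $\tau$ costs $O(\ell \log \ell)$ further bits, giving at most $2^{O(\ell \log \ell)}\cdot \mathrm{poly}(n)$ distinct L-systems of size $\le \ell$, and hence at most that many distinct length-$n$ strings produced.

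Next I would use the family of length-$n$ Sturmian prefixes. Every Sturmian word has factor complexity $k+1$, so $\delta = 2$ uniformly across the family, and parameterising by slope $\alpha$ and intercept $\rho$ at precision $\Theta(1/n)$ yields $\Theta(n^2)$ distinct length-$n$ prefixes. Pigeonhole then forces some prefix $w_n$ to satisfy $2^{O(\ell(w_n) \log \ell(w_n))}\cdot \mathrm{poly}(n) \ge n^2$, hence $\ell(w_n) = \Omega(\log n / \log \log n)$; since $\delta(w_n) = 2 = O(1)$, this is $\Omega(\delta \log n / \log \log n)$.

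The main obstacle is closing the residual $\log \log n$ factor to match the $\Omega(\delta \log n)$ bound stated. I would attempt this either by tightening the L-system enumeration via a canonical naming scheme on $V$---for instance fixing the labels by order of first appearance in the concatenated axiom and rule right-hand sides, so that no bits are spent on renaming and the enumeration drops to $2^{O(\ell)}\cdot\mathrm{poly}(n)$---or by replacing the Sturmian family with a richer collection of low-complexity strings whose cardinality is super-polynomial in $n$, so that pigeonhole delivers the full $\log n$ factor. The latter is delicate because any string with $\delta = O(1)$ admits an $O(\delta \log n)$-bit representation \cite{KNP20}, which caps the family's cardinality; so the canonical-form route appears to be the one that must carry the argument.
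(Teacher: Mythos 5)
Your high-level strategy (a counting/incompressibility argument: too few small L-systems to cover a large family of low-$\delta$ strings) is exactly the paper's, but your concrete instantiation has a genuine gap that you yourself flag and do not close. The Sturmian prefix family has only polynomially many length-$n$ members, so pigeonhole yields at best $\ell\log\ell=\Omega(\log n)$, i.e.\ $\ell=\Omega(\log n/\log\log n)$, which is strictly weaker than the claimed $\Omega(\delta\log n)$. Neither of your proposed repairs works as described. The canonical-naming route cannot reduce the enumeration to $2^{O(\ell)}\cdot\mathrm{poly}(n)$: the rule right-hand sides form a string of total length up to $\ell$ over an alphabet of up to $\ell$ variables, so even after quotienting by the $\le\ell!$ renamings there remain $\ell^{\Theta(\ell)}/\ell!=2^{\Theta(\ell\log\ell)}$ systems, and the $\log\log n$ loss persists.

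The route you dismiss is the one that actually carries the proof, and your reason for dismissing it rests on a misreading of the upper bound. The $O(\delta\log(n/\delta))$ bound of Kociumaka et al.\ is a \emph{space} bound in machine words, i.e.\ $O(\delta\log(n/\delta)\log n)$ bits, so a family with $\delta=O(1)$ may contain up to $2^{O(\log^2 n)}$ strings of length $n$ --- super-polynomial, not polynomial. Kociumaka et al.~(Thm.~2) exhibit a family saturating this: $2^{\Theta(\log^2 n)}$ strings with $\delta=O(1)$ (the paper describes it explicitly: $a^n$ with one position flipped to $b$ in each interval $[2\cdot 4^{j-2}+1,4^{j-1}]$). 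Any representation of such a family needs $\Omega(\log^2 n)$ bits, and since there are only $2^{O(\ell\log n)}$ L-systems of size $\ell$, this forces $\ell=\Omega(\log n)=\Omega(\delta\log n)$ with no $\log\log n$ loss --- note that with a family this large even the coarse $2^{O(\ell\log n)}$ count suffices, so no canonical-form tightening is needed at all. Substituting this family for the Sturmian one turns your argument into the paper's proof.
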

\begin{proof}
Kociumaka et al.~\cite[Thm.~2]{KNP20} exhibit a string family of $2^{\Theta(\log^2 n)}$ elements with $\delta=O(1)$, so it needs $\Omega(\log^2 n)$ bits, that is, $\Omega(\log n) = \Omega(\delta \log n)$ space, to be represented with any method. Therefore $\ell = \Omega(\log n) = \Omega(\delta\log n)$ in this family, because there are only $2^{O(\ell\log n)}$ distinct L-systems of size $\ell$.
\qed
\end{proof}

Those strings are formed by $n$ $a$s, replacing
them by $b$s at single arbitrary positions between $2 \cdot 4^{j-2}+1$ and $4^{j-1}$ for every $j \ge 2$. While such a string is easily generated by a composition system of size $\Theta(\log n)$, we could only produce L-systems of size $\Theta(\log^2 n)$ generating it. We now prove bounds between L-systems and context-free grammars.

\begin{theorem} 
For any L-system $L=(V,R,S,\tau,d,n)$
of size $\ell$ generating $w$, there is a context-free grammar of size $(d+1)\ell$ generating $w$. If the morphism represented by $R$ is expanding, then the grammar is of size $O(\ell\log n)$.
\end{theorem}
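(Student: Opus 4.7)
The plan is to unroll the $d$ iterations of the L-system into a context-free grammar. For each variable $A\in V$ and each level $i\in\{0,1,\ldots,d\}$, I would introduce a distinct grammar nonterminal $A^{(i)}$, intended to satisfy $exp(A^{(i)})=\tau(R^i(A))$.

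The rules are forced by the morphism property. At the base level, since $\tau$ is a coding, I set $A^{(0)}\to\tau(A)$, a single terminal. At level $i\ge 1$, writing $R(A)=B_1B_2\cdots B_k$, the identity $R^i(A)=R^{i-1}(B_1)\cdots R^{i-1}(B_k)$ dictates the rule $A^{(i)}\to B_1^{(i-1)}B_2^{(i-1)}\cdots B_k^{(i-1)}$. A fresh start symbol $T$ receives the rule $T\to S_1^{(d)}\cdots S_m^{(d)}$, where $S=S_1\cdots S_m$ is the axiom. A short induction on $i$ confirms $exp(A^{(i)})=\tau(R^i(A))$, so $exp(T)=\tau(L_d)$. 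If $|L_d|>n$ I would truncate via a prefix spine: walk the derivation path leading to position $n$ and, at each level, replace the overflowing last child by a fresh nonterminal $P^{(i)}$ whose own rule ends in $P^{(i-1)}$, bottoming out at a single terminal.

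The size accounting is then routine: each rule $A^{(i)}$ with $i\ge 1$ has length exactly $|R(A)|$, summing over the $d$ upper levels to $d\sum_{A\in V}|R(A)|$; the base rules contribute $|V|\le \sum_{A\in V}|R(A)|$, since each $R(A)\in V^+$ has length at least $1$; and the start rule contributes at most $|S|$. Therefore the grammar has size at most $|S|+(d+1)\sum_{A\in V}|R(A)|\le (d+1)\ell$.

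For the expanding case, every $|R(A)|\ge 2$ yields $|L_i|\ge 2|L_{i-1}|$, hence $|L_d|\ge 2^d$. Since we may assume $d$ is minimal with $|L_d|\ge n$ (any smaller $d$ would fail to produce $w$), we get $2^{d-1}\le |L_{d-1}|<n$, that is, $d=O(\log n)$, and thus $(d+1)\ell=O(\ell\log n)$. The main subtlety I foresee is the truncation step: a naive prefix spine adds symbols bounded by $\sum_i|R(A_i)|=O(d\cdot\max_A|R(A)|)$, so keeping exactly the bound $(d+1)\ell$ in all cases would require absorbing the spine into the start rule and into rule boundaries where the prefix ends cleanly; for the asymptotic $O(\ell\log n)$ bound, this refinement is unnecessary.
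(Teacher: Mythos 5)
Your construction for the first claim is essentially the paper's: one level-indexed copy $A^{(i)}$ of each variable, rules forced by $R^i(A)=R^{i-1}(B_1)\cdots R^{i-1}(B_k)$, a fresh start symbol expanding the axiom, and the $(d+1)\ell$ size count (the paper folds $\tau$ into the last level instead of a separate base level, but that is cosmetic). The paper, like you, does not fully account for truncating $\tau(L_d)$ to its length-$n$ prefix; your prefix-spine remark is a reasonable way to handle it and you correctly flag that it threatens only the exact constant, not the asymptotics.

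The expanding case, however, has a genuine gap. You argue that ``we may assume $d$ is minimal with $|L_d|\ge n$,'' but this normalization is invalid: the generated string is $\tau(L_d[1,n])$ and depends on $d$ itself, not merely on whether $|L_d|\ge n$. For a non-prolongable expanding morphism the prefix $L_d[1,n]$ genuinely changes with $d$ (e.g.\ $R(0)=10$, $R(1)=01$ makes $L_d[1,2]$ alternate between $10$ and $01$), so the L-system you are handed may have $d\gg\log n$ while no smaller depth reproduces $w$ from the same axiom; your parenthetical ``any smaller $d$ would fail to produce $w$'' in fact concedes exactly this and therefore cannot be used to shrink $d$. The missing idea is to re-root rather than re-index: since the morphism is expanding, the first symbol of $L_{d-\lceil\lg n\rceil}$ already derives, in $\lceil\lg n\rceil$ further levels, a prefix of $L_d$ of length at least $2^{\lceil\lg n\rceil}\ge n$. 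One therefore makes that single symbol the new axiom and resets the depth to $\lceil\lg n\rceil$, after which your size bound gives $O(\ell\log n)$. This is how the paper closes the argument.
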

\begin{proof}
Consider the derivation tree for $w$ in $L$: the root children are $S=L_0$ at level 0, and if $A$ is a node at level $i$, then the children of $A$ are the elements in $R(A)$, at level $i+1$. The nodes in each level $i$ spell out $L_i$.

We create a grammar $G=(V',V,R',S')$ where $V'$ contains the initial symbol $S'$ and, for each variable $A \in V$ of the L-system, $d$ nonterminals $A_0,\ldots,A_{d-1}$. The terminals of the grammar are the set of L-system variables, $V$. Then, for each L-system rule $A \rightarrow B_1\cdots B_k$ appearing in level $0 \le i \le d-2$, we add the grammar rule $A_i \rightarrow  (B_1)_{i+1}\cdots (B_k)_{i+1}$. Further, for each rule $A \rightarrow B_1\cdots B_k$ appearing in level $d-1$, we add the grammar rule $A_{d-1} \rightarrow \tau(B_1)\cdots \tau(B_k)$. Finally, if $S = B_1\cdots B_k$ is the L-system axiom, we add the grammar rule $S' \rightarrow (B_1)_0 \cdots (B_k)_0$ for its initial symbol.

It is clear that the grammar is of size at most $(d+1)\ell$ and it generates $w$. If every rule is of size larger than $1$, and $d > \lg n$, then the prefix $w[1,n]$ of $\tau(L_d)$ is generated from the first symbol of $L_{d-\lceil \lg n \rceil}$, which can then be made the axiom and $d$ reduced to $\lceil \lg n \rceil$. In this case, the grammar is of size $O(\ell \log n)$.\qed
\end{proof}

For example, consider our L-system $0 \rightarrow 001$ and $1 \rightarrow 1$.
A grammar simulating a generation of $d=3$ levels contains the rules $S' \rightarrow 0_0$, 
$0_0 \rightarrow 0_10_11_1$,
$0_1 \rightarrow 0_20_21_2$, $1_1 \rightarrow 1_2$,
$0_2 \rightarrow 001$, and $1_2 \rightarrow 1$. Note how the grammar uses the level subindices to control the point where the L-system should stop.

On the other hand, while we believe that composition systems can be smaller than L-systems, we can prove that L-systems are not larger than grammars.

\begin{theorem}
It always holds that $\ell=O(g)$.
\end{theorem}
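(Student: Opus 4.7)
The plan is to simulate the grammar directly as an L-system by promoting each grammar symbol to an L-system variable. Given a grammar $G$ of size $g$ with nonterminals $X_1,\ldots,X_r$, rules $X_k \rightarrow \alpha_k$, and the set $\Sigma'$ of terminals appearing in the right-hand sides, I would construct $L=(V,R,S,\tau,d,n)$ with $V=\{X_1,\ldots,X_r\}\cup\Sigma'$, $R(X_k)=\alpha_k$ for each nonterminal, $R(a)=a$ for each $a\in\Sigma'$, axiom $S=X_r$, identity coding $\tau$, length $n=|w|$, and depth $d$ equal to the height of the derivation tree of $X_r$ in $G$.

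The first step is to verify that $L_d=w$. One parallel application of $R$ performs exactly one level of grammar derivation at every nonterminal occurrence while terminal occurrences remain in place, thanks to the self-loops $R(a)=a$. Since each $X_k$ references only nonterminals with smaller indices, the derivation tree of $X_r$ has height at most $r-1$, so after $d$ iterations every nonterminal has been fully expanded and $L_d$ equals $w$ as a string over $\Sigma'\subseteq V$; the identity coding $\tau$ then leaves $w$ unchanged.

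Second, I would bound the size: $|S|=1$, the nonterminal rules contribute $\sum_k |\alpha_k|=g$, and the terminal self-loops contribute $|\Sigma'|\le g$ because every element of $\Sigma'$ appears in some right-hand side. Hence $size(L)=O(g)$, and since the L-system size definition charges for neither $d$ nor $n$, we conclude $\ell=O(g)$. I do not foresee any substantial obstacle; the only detail to handle is that $R$ must be non-erasing, which holds under the standard assumption that the grammar has no $\varepsilon$-rules (such rules can always be eliminated without blowing up its size when the generated string is non-empty). The whole argument boils down to the observation that terminal self-loops let an L-system ``pause'' at finished positions while the rest of the derivation completes, so the grammar can be unrolled in $d$ parallel steps without any duplication of symbols.
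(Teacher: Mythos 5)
Your proposal is correct and follows essentially the same route as the paper: simulate the grammar's levelwise derivation by adding self-loop rules $a \rightarrow a$ for the terminals, use the identity coding, stop at depth equal to the derivation-tree height, and observe that the self-loops add only $|\Sigma'| \le g$ to the size. The only cosmetic difference is that the paper takes $R(S)$ as the axiom and drops the rule for $S$, while you keep $S$ itself as the axiom; your added remark about eliminating $\varepsilon$-rules to keep $R$ non-erasing is a detail the paper leaves implicit.
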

\begin{proof}
Consider a grammar $G=(V,\Sigma,R,S)$ of height $h$ generating $w[1,n]$. We define the L-system $L=(V\cup\Sigma,R',R(S),\tau,h,n)$, where $R'$ contains all the rules in $R$ except the one for $S$. We also include in $R$ the rules $a \rightarrow a$ for all $a \in \Sigma$. The coding $\tau$ is the identity function.

It is clear that this L-system produces the same derivation tree of $G$, reaching terminals $a$ at some level. Those remain intact up to the last level, $h$, thanks to the rules $a \rightarrow a$. At this point the L-system has derived $w[1,n]$. 

The size of the L-system is that of $G$ plus $|\Sigma|$, which is of the same order because every symbol $a \in \Sigma$ appears on some right-hand side (if not, we do not need to create the rule $a \rightarrow a$ for that symbol).
\qed
\end{proof}

The following simple result characterizes a class of morphisms generating families with constant-sized L-systems.

\begin{theorem}\label{thm:crescent_constant}
Let $w \in \Delta^*$, $\psi: \Delta^* \rightarrow \Delta^*$ be a non-erasing automorphism over free monoids, and $\tau: \Delta \rightarrow \Sigma$.
Then $\ell = O(1)$ on the family $\{\tau(\psi^d(w))\, |\, d > 0\}$.
\end{theorem}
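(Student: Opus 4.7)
The plan is to encode the triple $(w,\psi,\tau)$ directly as an L-system whose size is independent of $d$. Concretely, I would define $L=(V,R,w,\tau',d,n)$ by setting $V=\Delta\cup\Sigma$, stop level $d$, and target length $n=|\psi^d(w)|$. The rules are $R(a)=\psi(a)$ for every $a\in\Delta$, which lie in $\Delta^+\subseteq V^+$ because $\psi$ is a non-erasing automorphism on $\Delta^*$; for each $a\in\Sigma\setminus\Delta$ I set $R(a)=a$, only to make $R$ total on $V$ (such rules are never triggered when starting from $w\in\Delta^*$). The coding $\tau'$ extends $\tau$ to $V$ by the identity on $\Sigma\setminus\Delta$.

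A straightforward induction on $i$ yields $L_i=\psi^i(w)\in\Delta^*$ for every $0\le i\le d$, so the string generated by $L$ is $\tau'(L_d[1,n])=\tau(\psi^d(w))$ as required. Its size is
\begin{equation*}
|w|+\sum_{a\in\Delta}|\psi(a)|+|\Sigma\setminus\Delta|=|w|+size(\psi)+|\Sigma\setminus\Delta|,
\end{equation*}
which depends only on the fixed data $w$, $\psi$, $\tau$, $\Delta$, $\Sigma$ and not on $d$. Hence $\ell=O(1)$ over the whole family.

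There is essentially no deep obstacle, since the L-system formalism is tailor-made for iterated morphisms. The only minor wrinkle is that the theorem allows $\tau:\Delta\to\Sigma$ between possibly different alphabets, whereas the coding in an L-system has type $V\to V$; this mismatch is absorbed by enlarging $V$ to include $\Sigma$ and extending both $R$ and $\tau$ trivially on $\Sigma\setminus\Delta$. Everything else is bookkeeping.
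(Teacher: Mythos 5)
Your proposal is correct and follows essentially the same route as the paper: encode $\psi$ directly as the rule set of an L-system with axiom $w$, stop level $d$, and length $n=|\psi^d(w)|$, so the size is independent of $d$. The only difference is that you explicitly enlarge $V$ to $\Delta\cup\Sigma$ to reconcile the theorem's coding $\tau:\Delta\to\Sigma$ with the L-system's requirement that the coding map $V$ to $V$, a bookkeeping detail the paper glosses over.
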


\begin{proof} 
We can easily simulate $\psi$ on the L-system $L = (\Delta, R, w, \tau, d, n)$ of fixed size, with $R(a)=\psi(a)$ and $n=|\psi^d(w)|$. The system generates $\tau(\psi^d(w))$ and, as $d$ grows, it does not change its size. \qed
\end{proof} 

This implies that $\ell = O(1)$ on families of $n$-iterations of the Thue-Morse morphism, the Fibonacci morphism, images of $k$-uniform morphisms (i.e., morphisms generating $k$-automatic words \cite{Shallit2003}), and standard Sturmian morphisms \cite{deLuca1997}. More generally, $\ell$ is $O(1)$ on the set of prefixes of any morphic word.

\section{NU-Systems}

We now define a mechanism that combines both macro systems and L-systems, yielding a computable measure that is reachable and strictly better than $b$.

\begin{definition}
A {\em NU-system} is a tuple $N=(V,R,S,\tau,d,n)$, which is understood in the same way as L-systems, except that we extend rules with {\em extractions}, that is, $R : V \rightarrow (V \cup E)^+$ and
$$ E = \{A(l)[i,j] ~|~ A \in V, l,i,j\in \mathbb{N} \}.$$
The symbol $A(l)[i,j]$ means to expand variable $A$ for $l$ levels and then extract $\tau(A_l[i,j])$ from the string $A_l$ at level $l$, recursively expanding extractions if necessary. This counts as a single expansion (one level) of a rule, that is, the levels $L_i$ in the NU-system belong to $V^*$.
We also use $A(l) = A(l)[1,|A_l|]$ to denote the whole level $l$ of $A$. 
%
%
The {\em size} of the NU-system is $size(N)=|S|+\sum_{A \in V} |R(A)|$. We call $\nu$ the size of the smallest NU-system generating a string $w[1,n]$.
\end{definition}

Just as macro systems, a NU-system is {\em valid} only if it does not introduce circular dependencies. 
Let $maxl$ be the maximum $l$ value across every rule $A(l)[i,j]$ in the NU-system.
The following algorithm determines the string generated by the system, if any:
\begin{enumerate}
    \item Compute $|A_l|$ for every variable $A$ and level $0 \le l \le maxl$, using the rules:
    \begin{itemize}
        \item $|A_0|=1$.
        \item If $l > 0$ and $A \rightarrow B_1\cdots B_k$, then $|A_l|=|(B_1)_{l-1}|+\cdots+|(B_k)_{l-1}|$.
        \item Replace $|B(l)[i,j]| = j-i+1$ on the previous summands $|(B_r)_{l-1}|$.
    \end{itemize}
    This generates a system of equations without loops, which is trivially solved.
    \item Replace every symbol $A(l)[i,j]$ in $R$ by $A(l)[i] \cdots A(l)[j]$; we use $A(l)[r]$ to denote $A(l)[r,r]$.
    \item Expand the rules, starting from the axiom, level by level as in L-systems. Handle the symbols $A(l)[r]$ as follows:
    \begin{enumerate}
    \item Replace every $A(0)[r]$ (so $r=1$ if the NU-system is correct) by $\tau(A)$.
    \item Replace every $A(l)[r]$, if $l>0$ and $A \rightarrow B_1\cdots B_k$, as follows:
    \begin{itemize}
        \item Let $p_i = 1+\sum_{j=1}^{i-1} |(B_j)_{l-1}|$, for $1\le i \le k+1$.
        \item Let $s$ be such that $p_s \le r < p_{s+1}$.
        \item Replace $A(l)[r]$ by $B_s(l-1)[r-p_s+1]$.
    \end{itemize}
    \item Return to (a) until the extraction symbol disappears.
    \end{enumerate}
\end{enumerate}

Note that the symbol $B_s$ in step 3(b) can in turn be of the form $B_s=B(l')[r']$; we must then extract $B(l')[r']$ before continuing the extraction of $B_s(l-1)[r-p_s+1]$. If, along the expansion, we return again to the original $A(l)[r]$, then the system has no unique solution and thus it is invalid. This is computable because the number of possible combinations $A(l)[r]$ is bounded by $|V|\cdot maxl \cdot n$. 

We now show that NU-systems are at least as powerful as macro systems and L-systems.

\begin{theorem} It always holds that $\nu = O(\min(\ell,m))$.
\end{theorem}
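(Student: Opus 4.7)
The plan is to prove the two inequalities $\nu\le\ell$ and $\nu=O(m)$ independently.

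For $\nu\le\ell$ I would simply observe that an L-system $(V,R,S,\tau,d,n)$ is already a syntactically valid NU-system: its rules satisfy $R(A)\in V^+\subseteq (V\cup E)^+$, they contain no extractions, and the NU-system semantics coincide with the L-system semantics whenever no extraction appears. The very same tuple therefore generates $w$ as a NU-system at the same size.

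For $\nu=O(m)$ I would start from the smallest internal macro system of size $m$ and use Theorem~2 to obtain an equivalent bidirectional macro scheme $(x_1,s_1),\ldots,(x_b,s_b)$ with $b\le m$. From it I would build the NU-system $N=(V',R',S,\tau,2,n)$ with $V'=\{S\}\cup\Sigma$ and $\tau$ the identity on $\Sigma$, whose only nontrivial rule is
\begin{equation*}
R'(S) \;=\; B_1\cdots B_b,
\end{equation*}
where $B_k=x_k$ when $s_k=\bot$ and $B_k=S(1)[s_k,\,s_k+|x_k|-1]$ otherwise, together with the identity rules $R'(a)=a$ for every $a\in\Sigma$ to carry the terminals through the second level intact. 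Since every symbol actually occurring in $w$ must be emitted by some singleton phrase of the scheme, $|\Sigma|\le b$, so the total size is $1+b+|\Sigma|=O(m)$.

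The step I expect to be the main obstacle is verifying that this NU-system is valid and that its level-$1$ string $S_1$ equals $w$, despite the self-reference hidden in the extractions $S(1)[\cdot,\cdot]$. The key observation will be that resolving such an extraction at position $p_k+j$ of $S_1$ requests the symbol $S_1[s_k+j]$, which is exactly the step taken by the macro-scheme function $f(p_k+j)=s_k+j$; iterating the NU-system's extraction rule therefore mirrors iterating $f$ step for step, and it terminates at the same moment a terminal is reached, something the validity of the macro scheme guarantees. Once this correspondence is pinned down, $S_1=w$ follows, the identity rules propagate it to $L_2=w$, and $\tau(L_2)=w$ closes both directions.
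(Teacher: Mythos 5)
Your first half ($\nu \le \ell$) is exactly the paper's argument: an L-system is already a syntactically valid NU-system with no extractions, so the same tuple works at the same size. For the $\nu = O(m)$ half you take a genuinely different route. The paper simulates the internal macro system $M=(V,\Sigma,R,S)$ directly: it keeps every nonterminal of $M$, adds identity rules $a \rightarrow a$ for the terminals, rewrites each macro-system extraction $A'[j,k]$ as the NU-extraction $A'(d)[j,k]$ with $d=|V|$, and runs the system for $d$ levels. You instead first collapse the internal macro system to a bidirectional macro scheme of size $b \le m$ (via the paper's earlier equivalence theorem) and then encode that scheme as a constant-depth NU-system with a single self-referential rule $S \rightarrow B_1\cdots B_b$ built from extractions $S(1)[s_k, s_k+|x_k|-1]$. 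Both are correct. Your version buys a flatter construction (depth $2$ rather than $|V|$, one nontrivial nonterminal) and a validity argument that reduces verbatim to the termination of the macro-scheme function $f$ --- you correctly flag the self-reference in $S(1)[\cdot,\cdot]$ as the crux and resolve it by matching each extraction step to an application of $f$; your observation that $|\Sigma| \le b$ (every distinct symbol must be emitted by an explicit phrase) is what keeps the total size in $O(m)$. What your route gives up is a little generality: passing through the macro-scheme equivalence requires the macro system to be internal, whereas the paper's direct simulation applies to arbitrary macro systems; for the stated theorem this costs nothing, since $m$ is by definition the size of the smallest \emph{internal} macro system.
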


\begin{proof}
It always holds $\nu \le \ell$ because L-systems are a particular case of NU-systems. With respect to $m$, let
$M = (V, \Sigma, R, S)$ be a minimal macro system generating $w[1,n]$. Then we construct a NU-System $N=(V\cup\Sigma, R', S,\tau,d,n)$ where $\tau$ is the identity and $d=|V|$, which upper-bounds the height of the derivation tree. Each level of $N$ will simulate the sequence of extractions that lead from each $A[r]$ to its corresponding terminal in the macro system.

For each $a \in \Sigma$ we define the rule $a \rightarrow a$ in $R'$. For each rule $A \rightarrow B_1 \cdots B_m$ in $R$, we define the rule $A \rightarrow B_1' \cdots B_m'$ in $R'$, where $B_i' = B_i$ if $B_i \in V \cup \Sigma$, and $B_i' = A'(d)[j,k]$ if $B_i = A'[j,k]$. 
It is not hard to see that the NU-System $N$ simulates the macro system $M$, and its size is $O(m)$.
\qed
\end{proof}

For example, consider our previous macro system $A \rightarrow ab$ and $S \rightarrow A ~ S[1,4]$. The corresponding NU-system would have the rules $a \rightarrow a$, $b \rightarrow b$, $A \rightarrow ab$, and $S \rightarrow A ~ S(2)[1,4]$. The derivation is then generated as follows:
\begin{center}
\begin{tabular}{rclcl}
$L_0$ & $=$ & $S$ & $\longrightarrow$ &
     $A ~ S(2)[1] ~ S(2)[2] ~ S(2)[3] ~ S(2)[4]$ \\
& & & & $A ~ A(1)[1] ~ A(1)[2] ~ (S(2)[1])(1)[1] ~ (S(2)[2])(1)[1]$ \\
& & & &  $A ~ a ~ b ~ (A(1)[1])(1)[1] ~ (A(1)[2])(1)[1]$ \\
$L_1$ & = & $A ~ a ~ b ~ a ~ b$ & $\longleftarrow$ & $A ~ a ~ b ~ a(1)[1] ~ b(1)[1]$ \\
$L_2$ & $=$ & $a ~ b ~ a ~ b ~ a ~ b$. & &
\end{tabular}
\end{center}

Our new measure $\nu$ is then reachable, strictly better than $b$ and incomparable with $\delta$. It is likely, however, that computing $\nu$ (i.e., finding the smallest NU-system generating a given string $w[1,n]$) is NP-hard.

NU-systems easily allow us concatenating and composing automorphisms. 

\begin{theorem}
Let $N_1=(V_1,R_1,S_1,\tau_1,d_1,n_1)$ and $N_2=(V_2,R_2,S_2,\tau_2,d_2,n_2)$ be NU-systems generating $w_1$ and $w_2$, respectively. Then there are NU-systems of size $O(size(N_1)+size(N_2))$ that generate $w_1 \cdot w_2$ and the {\em composition} of $w_1$ and $w_2$, which is the string generated by $N_2$ with axiom $w_1$, $(V_2,R_2,w_1,\tau_2,d_2,n_2)$.
\end{theorem}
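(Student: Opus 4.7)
The plan is to build the required NU-systems explicitly by reusing the rules of $N_1$ and $N_2$ on disjoint variable sets, adding only a constant number of fresh variables whose rules employ the extraction mechanism. After renaming WLOG so that $V_1\cap V_2=\emptyset$ (and, for composition, so that the output alphabet of $N_1$ lies inside $V_2$), we extend $\tau_1$ and $\tau_2$ to a common coding $\tau$ on the combined alphabet.

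For the concatenation $w_1\cdot w_2$, I introduce three fresh variables $\sigma_1,\sigma_2,S$ and add the rules
\[
\sigma_1 \rightarrow S_1,\quad \sigma_2 \rightarrow S_2,\quad S \rightarrow \sigma_1(d_1+1)[1,n_1]\,\sigma_2(d_2+1)[1,n_2]
\]
on top of $R_1\cup R_2$, with axiom $S$, depth $d=1$, and length $n=n_1+n_2$. A one-line induction on $\ell\ge 1$ gives $(\sigma_i)_\ell = L_{i,\ell-1}$, so $(\sigma_i)_{d_i+1}=L_{i,d_i}$. Expanding $S$ once and then resolving the two extractions via step~3 of the NU-system algorithm replaces them by $\tau_1(L_{1,d_1}[1,n_1])=w_1$ and $\tau_2(L_{2,d_2}[1,n_2])=w_2$, producing $L_1=w_1\cdot w_2$ as required.

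For the composition $(V_2,R_2,w_1,\tau_2,d_2,n_2)$, a single extraction in the axiom's rule suffices and the subsequent outer levels do the $R_2$-work for free. Take fresh variables $\sigma_1,S$ with rules $\sigma_1 \rightarrow S_1$ and $S \rightarrow \sigma_1(d_1+1)[1,n_1]$ on top of $R_1\cup R_2$, axiom $S$, depth $d=1+d_2$, and length $n=n_2$. After one outer expansion and extraction-resolution, $L_1=w_1\in V_2^*$; the next $d_2$ outer levels apply $R$, which restricts to $R_2$ on $V_2$, producing $L_{1+d_2}=R_2^{d_2}(w_1)$. Coding by $\tau$ on its first $n_2$ symbols then yields $\tau_2(R_2^{d_2}(w_1)[1,n_2])$, which by definition is the output of the composition system.

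In both cases the glue contributes only $|S_1|+|S_2|+O(1)$ (respectively $|S_1|+O(1)$) rule-symbols beyond $R_1$ and $R_2$, so the new system has size $O(size(N_1)+size(N_2))$. The bulk of the verification is routine and amounts to tracing step~3 of the NU-system algorithm on the extractions $\sigma_i(d_i+1)[\,\cdot\,]$; the main point to watch is keeping the $\tau$ applied inside each extraction consistent with the final $\tau$ applied to $L_d$, which is exactly what picking $\tau$ as the common extension of $\tau_1$ and $\tau_2$ (together with the disjoint renaming of alphabets) ensures.
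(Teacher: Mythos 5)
Your construction is essentially the paper's: wrap each axiom in a fresh variable, use a single high-level extraction $\sigma_i(\cdot)[1,n_i]$ to emulate running each subsystem, reuse $R_1\cup R_2$ on disjoint variable sets, and take depth $1$ (resp.\ $1+d_2$) for concatenation (resp.\ composition); your explicit $d_i+1$ even fixes an off-by-one the paper glosses over. The one point you only gesture at --- that $\tau$ is applied once inside each extraction and again to the final level, so the extracted symbols must be $\tau$-fixed --- is handled in the paper by introducing primed copies $V_i'$ with $\tau(a_i)=\tau_i(a)$ and $\tau(a)=a$ on $V_1\cup V_2$; you should make your ``disjoint renaming'' do exactly that rather than merely ensuring $V_1\cap V_2=\emptyset$.
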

\begin{proof}
Let $V_1' = \{ a_1 \,|\, a \in V_1\}$ and $V_2' = \{ a_2 \,|\, a \in V_2\}$ be disjoint copies of $V_1$ and $V_2$, respectively, and let $R_i'$, and $S_i'$ be variants that operate on $V_i'$ instead of $V_i$.
We build a NU-system $N=(V,R,S,\tau,1,n_1+n_2)$ for $w_1 \cdot w_2$, where $V= V_1' \cup V_2' \cup V_1 \cup V_2 \cup \{ Z_1,Z_2 \}$, where $Z_1$ and $Z_2$ are new symbols. Let $R = R_1' \cup R_2' \cup \{ Z_1 \rightarrow S_1', Z_2 \rightarrow S_2'\}$, plus the rules $a \rightarrow a$ for $a \in V_1 \cup V_2$. The axiom is then $S = Z_1(d_1) \cdot Z_2(d_2)$. Finally, the mapping on $V'_i$ is $\tau(a_i)=\tau_i(a)$, and $\tau(a)=a$ for $a \in V_1 \cup V_2$. It is easy to see that $N$ generates $w_1 \cdot w_2$.

To generate the composition, $V_2$ should contain the image of $V_1$ by $\tau_1$, but still $V_1'$ is disjoint from $V_2'$. The axiom is $Z_1(d_1)$. The mapping on $V_1'$ is $\tau(a_1)=\tau_1(a)_2$. On $V_2'$ we use $\tau(a_2)=\tau_2(a)$. On $V_1 \cup V_2$, we use $\tau(a)=a$. The depth is $1+d_2$. \qed
\end{proof}

\no{
\begin{theorem}\label{thm:constant-concatenation}
Let $w \in \Delta_1^*$, a morphism $\psi: \Delta_1^* \rightarrow \Delta_1^*$ crescent on $w$, and a coding $\tau_1: \Delta_1 \rightarrow \Sigma_1$. Let $s \in \Delta_2^*$, a morphism $\phi: \Delta_2^* \rightarrow \Delta_2^*$ crescent on $s$ and a coding $\tau_2: \Delta_2 \rightarrow \Sigma_2$. Let $\mathcal{F}_1 = \{\tau_1(\psi^n(w))\, |\, n > 0\}$ and $\mathcal{F}_2 = \{\tau_2(\phi^n(s))\, |\, n > 0\}$. Then $\nu = O(1)$ on the family $\mathcal{F}_1\mathcal{F}_2$.
\end{theorem}

\begin{proof} For simplicity we assume that $V_1$ is disjoint from $V_2$ and $\Sigma_1$ is disjoint from $\Sigma_2$ (this does not affect the validity of the proof). Let $x_{k,m} = w_ks_m \in \mathcal{F}_1\mathcal{F}_2$ with $w_k = \tau_1(\psi^k(w))$ and $s_m = \tau_2(\phi^m(s))$. Let $N_1 = (V_1, \Sigma_1, R_1, S_1, \pi_1, d_1, |w_k|)$  and $N_2 = (V_2, \Sigma_2, R_2, S_2, \pi_2, d_2, |s_m|)$ be minimal NU-Systems for $w_k$ and $s_m$. We construct a NU-System $N$ generating $x_{k,m}$ in the following way:

\begin{enumerate}
\item $V = \{S\} \cup V_1 \cup V_2$, with $S \not \in V_1 \cup V_2$
\item $\Sigma = \Sigma_1 \cup \Sigma_2$ 
\item $R = R_1 \cup R_2 \cup \{S \rightarrow S_1(d_1)S_2(d_2)\}$
\item The initial variable is $S$
\item The mapping is \begin{equation*}
  \pi(a) = \left\{
     \begin{array}{@{}l@{\thinspace}l}
       \pi_1(a)  &:\, a \in V_1\\
       \pi_2(a)  &:\, a \in V_2\\
       S &:\, \text{otherwise}
     \end{array}
   \right.
\end{equation*}
\item $d = 1$
\item $n = |x_{k,m}|$
\end{enumerate}

The NU-System $N = (V, \Sigma, R, S, \pi, 1, n)$ generates $x_{k,m}$. Note that as $|x_{k, m}|$ growths, the size of the NU-System produced by this algorithm is always upper bounded by $C_1 + C_2$, with $C_1$ and $C_2$ being the two constants upper bounding the families $\mathcal{F}_1$ and $\mathcal{F}_2$ (theorem \ref{thm:crescent_constant}). This implies that $\nu = O(1)$. \qed
\end{proof}}

The theorem allows a family $\mathcal{F}$ to have $\nu = O(1)$, by finding a finite collection of families generated by fixed non-erasing automorphisms, and then joining them using a finite number of set unions, concatenations and morphism compositions. 

\section{Future work}

We leave a number of open questions. We know $\nu = O(m) = O(b) = O(\delta\log(n/\delta))$, but it is unknown if $\nu = O(\gamma)$; if so, then $\gamma$ would be reachable.
We know $\ell = O(g)$, but it is unknown if $\ell = O(c)$; we suspect it is not, but in general we lack mechanisms to prove lower bounds on $\ell$ or $\nu$. We also know $m = O(b)$, but not if it can be strictly better.
%
We also do not know if these measures are monotone, and if they are actually NP-hard to compute (they are likely so).

We could prove that $g = O(\ell \log n)$, and thus the lower bound $\ell = \Omega(g / \log n)$, if every L-system could be made expanding, but this is also unknown. This, for example, would prove that the stretch $\ell = O(\delta/\log n)$ we found for a family of strings is the maximum possible.


\section{Conclusions}

Extending the study of repetitiveness measures, from parsing-based to morphism-based mechanisms, opens a wide number of possibilities for the study of repetitiveness. There is already a lot of theory behind morphisms, waiting to be exploited on the quest for a golden measure of repetitiveness.

We first generalized composition systems to macro systems, showing that a restriction of them, called internal macro systems, are equivalent to bidirectional macro schemes, the lowest reachable measure of repetitiveness considered in the literature. It is not yet known if general macro systems are more powerful. 

We then showed how morphisms, and measures based on mechanisms capturing that concept called L-systems (and variations), can be strictly better than $\delta$ for some string families, thereby questioning the validity of $\delta$ as a lower bound for reachable repetitiveness measures. L-systems are never larger than context-free grammars, but probably not always as small as composition systems.

Finally, we proposed a novel mechanism of compression aimed at unifying parsing and morphisms as repetitiveness sources, called NU-systems, which builds on macro systems and L-systems. NU-systems can combine copy-paste, recurrences, and concatenations and compositions of morphisms. The size $\nu$ of the smallest NU-system generating a string is a relevant measure of repetitiveness because it is reachable, computable, always in $O(b)$ and sometimes $o(\delta)$.


A simple lower bound capturing the idea of recurrence on a string, and lower bounding $\ell$, just like $\delta$ captures the idea of copy-paste and strictly lower bounds $b$, would be of great interest when studying morphism-based measures. For infinite strings, there exist concepts like {\em recurrence constant} and {\em appearance constant} \cite{Shallit2003}, but an adaptation, or another definition, is needed for finite strings. 
Besides, like Lindenmayer systems, NU-systems could be used to model other repetitive structures beyond strings that appear in biology, like the growth of plants and fractals. In this sense, they can be compared with tree grammars; the relation between NU-systems and TSLPs \cite{Loh15}, for example, deserves further study.


%
%
%
\bibliographystyle{splncs04}
\bibliography{bibliography}

\begin{thebibliography}{10}
\providecommand{\url}[1]{\texttt{#1}}
\providecommand{\urlprefix}{URL }
\providecommand{\doi}[1]{https://doi.org/#1}

\bibitem{Shallit2003}
Allouche, J.P., Shallit, J.: Automatic Sequences: Theory, Applications,
  Generalizations. Cambridge University Press (2003)

\bibitem{Bannai2021}
Bannai, H., Funakoshi, M., I, T., Koeppl, D., Mieno, T., Nishimoto, T.: A
  separation of $\gamma$ and $b$ via thue--morse words. CoRR  \textbf{\rm
  2104.09985} (2021)

\bibitem{CEKNP19}
Christiansen, A.R., Ettienne, M.B., Kociumaka, T., Navarro, G., Prezza, N.:
  Optimal-time dictionary-compressed indexes. ACM Trans. Alg.  \textbf{17}(1),
  art.~8 (2020)

\bibitem{deLuca1997}
{de Luca}, A.: Standard {S}turmian morphisms. Theor. Comp. Sci.
  \textbf{178}(1),  205--224 (1997)

\bibitem{Gasieniec1996}
Gasieniec, L., Karpinski, M., Plandowski, W., Rytter, W.: Efficient algorithms
  for {Lempel-Ziv} encoding. In: Proc. SWAT. pp. 392--403 (1996)

\bibitem{Kempa2018}
Kempa, D., Prezza, N.: At the roots of dictionary compression: String
  attractors. In: Proc. 50th STOC. p. 827–840 (2018)

\bibitem{Kida2003}
Kida, T., Matsumoto, T., Shibata, Y., Takeda, M., Shinohara, A., Arikawa, S.:
  Collage system: a unifying framework for compressed pattern matching. Theor.
  Comp. Sci.  \textbf{298}(1),  253--272 (2003)

\bibitem{KY00}
Kieffer, J.C., Yang, E.H.: Grammar-based codes: {A} new class of universal
  lossless source codes. IEEE Trans. Inf. Theory  \textbf{46}(3),  737--754
  (2000)

\bibitem{KNP20}
Kociumaka, T., Navarro, G., Prezza, N.: Towards a definitive measure of
  repetitiveness. In: Proc. 14th LATIN. pp. 207--219 (2020)

\bibitem{LZ76}
Lempel, A., Ziv, J.: On the complexity of finite sequences. IEEE Trans. Inf.
  Theory  \textbf{22}(1),  75--81 (1976)

\bibitem{Lindenmayer1968a}
Lindenmayer, A.: Mathematical models for cellular interactions in development
  {I}. {F}ilaments with one-sided inputs. J. Theor. Biol.  \textbf{18}(3),
  280--299 (1968)

\bibitem{Lindenmayer1968b}
Lindenmayer, A.: Mathematical models for cellular interactions in development
  {II}. {S}imple and branching filaments with two-sided inputs. J. Theor. Biol.
   \textbf{18}(3),  300--315 (1968)

\bibitem{Loh15}
Lohrey, M.: Grammar-based tree compression. In: Proc. DLT. pp. 46--57 (2015)

\bibitem{Lothaire2002}
Lothaire, M.: Algebraic Combinatorics on Words. Cambridge University Press
  (2002)

\bibitem{Navarro2021}
Navarro, G.: Indexing highly repetitive string collections, part {I}:
  Repetitiveness measures. ACM Comp. Surv.  \textbf{54}(2),  article 29 (2021)

\bibitem{NOP20}
Navarro, G., Ochoa, C., Prezza, N.: On the approximation ratio of ordered
  parsings. IEEE Trans. Inf. Theory  \textbf{67}(2),  1008--1026 (2021)

\bibitem{Raskhodnikova2013}
Raskhodnikova, S., Ron, D., Rubinfeld, R., Smith, A.: Sublinear algorithms for
  approximating string compressibility. Algorithmica  \textbf{65}(3),  685--709
  (2013)

\bibitem{Shallit2020}
Shallit, J.: String attractors for automatic sequences. CoRR  \textbf{\rm
  2012.06840} (2020)

\bibitem{Storer1982}
Storer, J.A., Szymanski, T.G.: Data compression via textual substitution. J.
  ACM  \textbf{29}(4),  928–951 (1982)

\end{thebibliography}

\end{document}